\documentclass[conference]{IEEEtran}
\usepackage{amsfonts}
\usepackage{graphicx}
\usepackage{color}
\usepackage{amsmath}
\usepackage{authblk}
\usepackage{theorem}
\usepackage{color}
\newtheorem{theorem}{Theorem}
\newtheorem{lemma}{Lemma}

\usepackage{multicol}
\begin{document}
\title{Rate Selection for Cooperative HARQ-CC Systems over Time-Correlated Nakagami-m Fading Channels}
\author{Zheng Shi\thanks{shizheng0124@gmail.com}}
\author{Shaodan Ma}
\author{Kam-Weng Tam}
\affil{Department of Electrical and Computer Engineering, University of Macau, Macau}
\maketitle
\begin{abstract}
This paper addresses the problem of rate selection for the cooperative hybrid automatic repeat request with chase combination (HARQ-CC) system, where time correlated Nakagami-m fading channels are considered. To deal with this problem, the closed-form cumulative distribution function (CDF) for the combine SNRs through maximal ratio combining (MRC) is first derived as a generalized Fox's $\bar H$ function. By using this result, outage probability and delay-limited throughput (DLT) are derived in closed forms, which then enables the rate selection for maximum DLT. These analytical results are validated via Monte Carlo simulations. The impacts of time correlation and channel fading-order parameter $m$ upon outage probability, DLT and the optimal rate are investigated thoroughly. It is found that the system can achieve more diversity gain from less correlated channels, and the outage probability of cooperative HARQ-CC system decreases with the increase of $m$, and etc. Furthermore, the optimal rate increases with the number of retransmissions, while it decreases with the increase of the channel time correlation.
\end{abstract}
\begin{IEEEkeywords}
Hybrid automatic repeat request, chase combination, time correlation, Nakagami-m fading channels, delay-limited throughput, rate selection.
\end{IEEEkeywords}
\IEEEpeerreviewmaketitle
\section{Introduction}
Hybrid automatic repeat request (HARQ) is a reliable and effective transmission technique to overcome the channel impairments without the knowledge of channel state information (CSI). In HARQ scheme, the transmission information is first encoded via forward error-correcting encoder. The encoded packet will be transmitted in a number of HARQ rounds until the receiver successfully decodes the packet. On the basis of whether the previously failed packets are used for decoding or not, HARQ scheme can be classified into two types, i.e. the Type I and the Type II HARQ. For the Type I, the previously failed packets will be discarded. On the contrary, for the Type II, the previously failed packets will be stored for reutilization. By using different decoding techniques to deal with all the received packets, the Type II can be further divided into two schemes, i.e. HARQ with chase combining (HARQ-CC) and HARQ with incremental redundancy (HARQ-IR). For HARQ-CC, the same packet will be transmitted in each HARQ round, and the technique of maximal rate combination (MRC) is adopted to combine all the received packets. However, new parity bits will be sent in each HARQ round for HARQ-IR, and code combining method is used for decoding the received packets. It is obvious that the Type II is superior to Type I because of exploiting the previously failed packets. Meanwhile, HARQ-IR generally performs better than HARQ-CC in terms of data transmission rate because of utilizing complex code combining approach. However, the HARQ-CC scheme has a low complexity, and a low hardware requirement, especially the buffer size. Thus our focus is put onto HARQ-CC in this paper.

Performance of HARQ-CC has been frequently discussed in the literature. Generally, the performance evaluation of HARQ-CC depends on the distribution of the combining signal-to-noise ratio (SNR) through MRC. In \cite{su2011optimal}, the optimal power assignment is presented for quasi-static Rayleigh fading channels. Because of the channels are constant in each HARQ round, the combine SNR follows exponential distribution. In \cite{chaitanya2013optimal}, optimal power allocation is studied for i.i.d. Rayleigh fading channels. To simplify the analysis, a truncated probability density function (PDF) of the combine SNR was derived. The truncated PDF become more accurate for a higher transmitted SNR. Both the HARQ-IR and the HARQ-CC schemes are studied in cooperative system over i.i.d. Rayleigh fading channels in \cite{chelli2013performance}, where the exact PDF of the combine SNR is derived for evaluating HARQ-CC.

In the practical scenarios, users with low-to-medium mobility typically suffer from time-correlated fading \cite{kim2011optimal}. An optimal rate adaption scheme is proposed for time-correlated Rayleigh fading channels, where the optimal rate is obtained through maximizing the delay-limited throughput (DLT) by using the Gaussian approximation for the combine SNR. In \cite{jin2011optimal}, another optimal rate selection scheme is presented for time-correlated Nakagami-m fading channels via minimizing the allocated resource. In that paper, the combine SNR is approximated as a Gamma random variable. An adaptive power allocation is proposed for time-correlated Rayleigh fading channels in \cite{chaianya2014adaptive} by using the same approximation for the combine SNR as \cite{chaitanya2013optimal}.

Literatures seldom put their interest upon the cooperative HARQ systems, especially for the time-correlated channels. Due to the involvement of cooperative communications and the presence of time-correlation, the rate selection becomes more challenging than the prior works. In this paper, we investigate the rate selection problem in the cooperative HARQ-CC system over time-correlated Nakagami-m fading channels. The PDF of the combine SNR is first derived in closed-form as a generalized Fox's $\bar H$ function. Based on the analytical result, outage probability and DLT are then derived, which then enables the rate selection by maximizing the DLT. The impacts of time-correlation, the maximum number of transmissions and channel fading-order parameter $m$ on outage probability, DLT and the optimal rate are finally investigated for cooperative HARQ-CC system over time-correlated Nakagami-m fading channels.

The remainder of the paper is organized as follows. In Section \ref{sys_mod}, the system model is presented. In Section \ref{sec:per}, the distribution of the instantaneous combine SNR via MRC can be derived in terms of the generalized Fox's $\bar H$ function. By applying these analytical expressions into performance analysis, the outage probability and DLT for HARQ-CC are derived in Section \ref{adap_sch}, and an optimal rate selection scheme is presented. The Section \ref{sec_con} draws some important conclusions.

\section{System Model}
\label{sys_mod}
\begin{figure}
  \centering
  \includegraphics[width=3.5in]{./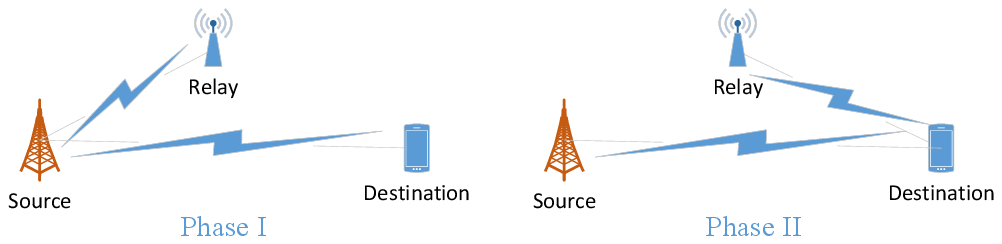}\\
  \caption{System model}\label{fig:sys_mod}
\end{figure}
In this paper, we consider a cooperative system of three nodes, as shown in Fig. \ref{fig:sys_mod}. Prior to the transmission of $b$ information bits to the destination, these bits are encoded and transmitted with a rate of $\mathcal R$. Denote the transmitted signal as $x$. The power of transmitted signal $x$ is assumed to be $P_s = {\rm{E}}(x^2)$. The same signal will be transmitted in each HARQ round. The transmission procedure consists of two phase by employing HARQ-CC into cooperative system. For the phase I, the source broadcasts the transmitted signal $x$ to the relay and the destination. The received signal corrupted by the block Nakagami-m fading channels at the destination and the relay in the $k$th HARQ round can be expressed as
\begin{equation}\label{eqn:rec_signal_k_round_SD}
  y_{SD}^k = h_{SD}^k x + n_{SD}^k
\end{equation}
\begin{equation}\label{eqn:rec_signal_k_SR}
  y_{SR}^k = h_{SR}^k x + n_{SR}^k
\end{equation}
where $h_{SD}^k$ and $h_{SR}^k$ represent the channel coefficients associated with the links of source-to-destination and source-to-relay, respectively; $n_{SD}^k$ and $n_{SR}^k$ are defined as additive white gaussian noises with mean zero and variance $\sigma^2$, i.e. $n_{SD}^k~(n_{SR}^k) \sim N(0,\sigma^2)$. Unlike the previous research upon HARQ-CC scheme \cite{su2011optimal,chaitanya2013optimal,chelli2013performance}, time-correlation of channels is considered in this paper. The time-correlated channels are modeled as a multivariate Nakagami-m distribution with exponential correlation. Specifically, the amplitude of channel gain $|h_{SD}^{k}|$ follows a Nakagami-m distribution, i.e., $|h_{SD}^{k}| \sim Nakagami\left( m,{\Omega _{SD}^k} \right)$. Hereby, the squared amplitude of channel gain $\vert h_{SD}^{k}\vert^{2}$ are Gamma-distributed with parameters $(m,\Omega_{SD}^{k}/{m})$, i.e. $\vert h_{SD}^{k}\vert^{2} \sim \mathcal{G} (m,\Omega_{SD}^{k}/{m})$. Here a widely used exponential time correlation model \cite{beaulieu2011novel} is adopted and the cross-correlation between $|h_{SD}^i|^2$ and $|h_{SD}^j|^2$ can be derived as
\begin{equation}\label{eq:4}
{\rho _{SD}^{i,j}} = \frac{{cov(|{h_{SD}^i}{|^2},|{h_{SD}^j}{|^2})}}{{\sqrt {var(|{h_{SD}^i}{|^2})var(|{h_{SD}^j}{|^2})} }} = {\lambda _{SD}^i}^2{\lambda _{SD}^j}^2
\end{equation}

The same time-correlation model applies to $|h_{SR}^k|$, that is, $|h_{SR}^{k}| \sim Nakagami\left( m,{\Omega _{SR}^k} \right)$, and the cross-correlation between $|h_{SR}^i|^2$ and $|h_{SR}^j|^2$  is denoted by ${\rho _{SR}^{i,j}} = {\lambda _{SR}^i}^2{\lambda _{SR}^j}^2$.

In order to decode the received signals at the destination in this phase, MRC is adopted to combine all the previously received signals. Assume that $M$ HARQ rounds are utilized, the combining signal is therefore formulated as
\begin{equation}\label{eqn:rec_MRC_foru}
r_{SD} = {{\bf{w}}^T}{{\bf{y}}_{SD}} = {{\bf{w}}^T}{\bf{h}}_{SD}x + {{\bf{w}}^T}{\bf{n}}_{SD}
\end{equation}
where ${\bf{y}}_{SD}=\{y_{SD}^1,y_{SD}^2,\cdots,y_{SD}^M\}^T$ defines the vector of received signals; ${\bf{w}} = \{w_1, w_2, \cdots, w_M\}^T$ defines the weights of MRC; ${\bf{h}}_{SD}=\{h_{SD}^1,h_{SD}^2,\cdots,h_{SD}^M\}^T$ and ${\bf{n}}_{SD}=\{n_{SD}^1,n_{SD}^2,\cdots,n_{SD}^M\}^T$ are defined as column vectors of channel coefficients and Gaussian noises in all HARQ rounds, respectively. Since the signal $x$ has average power $P_s$, the instantaneous output SNR for MRC after $M$ HARQ rounds is
\begin{equation}\label{eqn:snr_M}
{\gamma _{D,1}^M} = \frac{{{P_s}{{\left| {{{\bf{w}}^T}{{\bf{h}}_{SD}}} \right|}^{\rm{2}}}}}{{\mathbb{E}\left\{ {{{\left| {{{\bf{w}}^T}{{\bf{n}}_{SD}}} \right|}^2}} \right\}}} = \frac{{{P_s}{{\left| {{{\bf{w}}^T}{{\bf{h}}_{SD}}} \right|}^{\rm{2}}}}}{{{\sigma ^2}{{\left\| {\bf{w}} \right\|}^2}}}
\end{equation}

Since constants do not matter, one could always scale $\bf{w}$ such that ${{{\left\| {\bf{w}} \right\|}^2}}=1$. For MRC technique, the output SNR $\gamma _M$ is maximized. By using Cauchy-Schwarz inequality, the expression (\ref{eqn:snr_M}) has a maximum only if $\bf{w}$ is linearly proportional to ${\bf{h}}_{SD}$, i.e. ${w_k} = \frac{{h_{SD}^k}^*}{{\left\| {{{\bf{h}}_{SD}}} \right\|}}$. Therefore, $\gamma_{D,1}^M$ is rewritten as
\begin{equation}\label{eqn:snr_com_fina}
{\gamma _{D,1}^M} = \frac{{{P_s}{{\left\| {{{\bf{h}}_{SD}}} \right\|}^2}}}{{{\sigma ^2}}} = \frac{{{P_s}}}{{{\sigma ^2}}}\sum\nolimits_{k = 1}^M {{{\left| {h_{SD}^k} \right|}^2}}
\end{equation}

Similarly, the combine SNR at the relay after $M$ HARQ rounds is formulated as
\begin{equation}\label{eqn:snr_com_fin_sr}
{\gamma _{R}^M} = \frac{{{P_s}}}{{{\sigma ^2}}}\sum\nolimits_{k = 1}^M {{{\left| {h_{SR}^k} \right|}^2}}
\end{equation}

Once the destination fails to decode the received signal $y_{SR}^k$, the destination will feed back a negative acknowledgement (NACK) to the source node. The next HARQ round will be immediately triggered and the source retransmits the same signal $x$ once receiving the NACK message. The number of HARQ-CC rounds is limited to $K$ to avoid network congestion. Only if the relay successfully decodes the received signal, the transmission procedure will proceed to the phase II, that is, the source and the relay are cooperative to transmit the signal. In this phase, the received signal associated with the link between the relay and the destination is
\begin{equation}\label{eqn:rec_signal_k_SR}
  y_{RD}^k = h_{RD}^k x + n_{RD}^k
\end{equation}
where $h_{RD}^k$ represents the channel coefficient associated with the link of relay-to-destination; $n_{RD}^k$ is defined as an additive white Gaussian noise with mean zero and variance $\sigma^2$, i.e. $n_{RD}^k \sim N(0,\sigma^2)$. Similarly, $|h_{RD}^{k}| \sim Nakagami\left( m,{\Omega _{RD}^k} \right)$, and the cross-correlation between $|h_{RD}^i|^2$ and $|h_{RD}^j|^2$  is denoted by ${\rho _{RD}^{i,j}} = {\lambda _{RD}^i}^2{\lambda _{RD}^j}^2$. Similar to the phase I, suppose that the relay successfully decodes the received signal after $r$ HARQ rounds, then the resultant combine SNR relied upon $\{y_{SD}^k, y_{RD}^l, k \in [1,M], l \in [r+1,M] \}$ is expressed as
\begin{equation}\label{eqn:snr_com_fin_d}
\gamma _{D,2}^{M,r} = \frac{{{P_s}}}{{{\sigma ^2}}}\sum\nolimits_{k = 1}^M {{{\left| {h_{SD}^k} \right|}^2}}  + \frac{{{P_s}}}{{{\sigma ^2}}}\sum\nolimits_{l = r + 1}^M {{{\left| {h_{SR}^l} \right|}^2}}
\end{equation}

It should be mentioned that $h_{SD}^k$, $h_{SR}^k$ and $h_{RD}^k$ are independent of each other. In the case of successively decoding the signal $x$ at the destination no matter in the phase I or the phase II, an acknowledgement (ACK) message will be transmitted to the source. Consequently, the retransmissions for $x$ will be ended, and the subsequent $b$ information bits will be encoded for transmission in the same way. To address the problem of rate selection, the distribution of the combine SNR should be derived first.

\section{Closed-Form PDF and CDF for Instantaneous Combining SNRs}
\label{sec:per}
From (\ref{eqn:snr_com_fina}), (\ref{eqn:snr_com_fin_sr}) and (\ref{eqn:snr_com_fin_d}), all the combine SNRs are expressed as the sum of correlated Gamma random variables (RVs). The nature of our work becomes to determine the distribution of sum of correlated Gamma random variables. Some previous literatures have studied this problem. In \cite{alouini2001sum,alexandropoulos2009new}, the PDF for the sum of correlated Gamma RVs is expressed as multi-fold infinite series. In \cite{chaianya2014adaptive}, the PDF for sum of correlated exponential (a special case of Gamma) RVs is expressed as a confluent Lauricella hypergeometric function of multivariates, which is multi-fold integration. However, both the results obtained in \cite{chaianya2014adaptive} and \cite{alouini2001sum} are hard to evaluate precisely because of involving of multiple folds. Imran S. A. in \cite{ansari2012sum} give a new result for the distribution of sum of independent Gamma RVs. Its PDF is derived as a generalized Fox's H function. It is noteworthy that this alternative result presented in \cite{ansari2012sum} can be expressed in the form of a single-fold integration, which is readily computable. The authors in \cite{ansari2012sum} provide an efficient MATHEMATICA implementation of the generalized Fox's $\bar H$ function, which can be evaluated fast and accurately. Following a similar approach as in \cite{ansari2012sum}, we here aim to derive the distribution of the sum of correlated Gamma RVs.

We first derive the distribution for $\gamma _{D,1}^M = \sum\nolimits_{k = 1}^M {{z_k}} $, where ${z_k} = \frac{{{P_s}}}{{{\sigma ^2}}}{\left| {{h_{SD}^k}} \right|^2}$. It can be proved that $z_k$ follows Gamma distribution with parameter $(m,{\Omega _{SD}^k}'/m)$, i.e. $z_k \sim \mathcal{G} (m,{\Omega_{SD}^k}'/{m})$, where ${\Omega_{SD}^k}'=\Omega_{SD}^k P_s/\sigma^2$. Without loss of generality, $z_k$ is distributed as $z_k \sim {\mathcal{G}}(m,\beta_k)$, where $\beta_k = {\Omega_{SD}^k}'/{m}$. After some manipulations, it can be obtained that the cross-correlation between $z_i$ and $z_j$ is equal to $\rho_{SD}^{i,j}$. To cope with this problem, the following lemma is given for the PDF of ${Y_M} = \sum\nolimits_{k = 1}^M {{z_k}} $.
\begin{lemma}\label{theorem:PDF}(PDF of the Sum of Gamma Random Variables with Exponential Correlation). The pdf of ${\gamma _{D,1}^M}$ can be derived in terms of the generalized Fox's $\bar H$ function \cite[Def. A.57]{mathai2009h} as
\begin{equation}\label{eqn:pdf_Y}
\begin{array}{l}
{f_{{\gamma _{D,1}^M}}^{(1)}}\left( y \right) = \prod\limits_{k = 1}^M {{\delta _k}^{ - m}}  \times \\
\bar H_{M,M}^{0,M}\left[ {\left. {\begin{array}{*{20}{c}}
{\left( {1 - {\delta _1}^{ - 1},1,m} \right), \cdots ,\left( {1 - {\delta _M}^{ - 1},1,m} \right)}\\
{\left( { - {\delta _1}^{ - 1},1,m} \right), \cdots ,\left( { - {\delta _M}^{ - 1},1,m} \right)}
\end{array}} \right|{e^y}} \right]
\end{array}
\end{equation}
where $\{\delta _k\}_{k=1}^{M}$ are defined as the eigenvalues of the matrix $\bf A=DC$, where $\bf D$ is the $M \times M$
diagonal matrix with the diagonal entries $\{\beta _k\}_{k=1}^{M}$, and $\bf C$ is the $M \times M$ positive definite matrix given by
\begin{equation}\label{eqn:C_mat_def}
{\bf C} = \left[ {\begin{array}{*{20}{c}}
1&{\sqrt {\rho _{SD}^{_{1,2}}} }& \cdots &{\sqrt {\rho _{SD}^{_{_{1,M}}}} }\\
{\sqrt {\rho _{SD}^{_{2,1}}} }&1& \cdots &{\sqrt {\rho _{SD}^{_{_{2,M}}}} }\\
 \vdots & \vdots & \ddots & \vdots \\
{\sqrt {\rho _{SD}^{_{M,1}}} }&{\sqrt {\rho _{SD}^{_{_{M,2}}}} }& \cdots &1
\end{array}} \right].
\end{equation}

\end{lemma}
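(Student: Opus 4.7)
The strategy is to derive the MGF of $Y_M=\sum_{k=1}^M z_k$ in closed product form, then recognise its inverse Laplace transform (written as a Bromwich integral) as the generalized Fox's $\bar H$ function in (\ref{eqn:pdf_Y}), following the spirit of the approach used in \cite{ansari2012sum} for independent Gamma summands. First, for the multivariate Nakagami-m model with the exponential correlation of \cite{beaulieu2011novel}, the joint MGF of $(z_1,\ldots,z_M)^T$ takes the Kibble-type determinant form
\[M_{\mathbf z}(s_1,\ldots,s_M)=\det\bigl(\mathbf I_M+\mathbf{DC}\,\mathrm{diag}(s_1,\ldots,s_M)\bigr)^{-m},\]
where $\mathbf D=\mathrm{diag}(\beta_1,\ldots,\beta_M)$ and $\mathbf C$ is the correlation matrix in (\ref{eqn:C_mat_def}). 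Specialising to $s_1=\cdots=s_M=s$ and factoring the characteristic polynomial of $\mathbf A=\mathbf{DC}$ through its eigenvalues $\{\delta_k\}_{k=1}^M$ gives
\[M_{Y_M}(s)=\det(\mathbf I_M+s\mathbf A)^{-m}=\prod_{k=1}^M(1+s\delta_k)^{-m},\]
so the eigendecomposition absorbs all of the correlation into the $\delta_k$'s and the correlated sum becomes structurally identical to the independent case of \cite{ansari2012sum}.

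The second step is the identification with $\bar H$. Rewriting using $1+s\delta_k=\delta_k(s+\delta_k^{-1})$,
\[M_{Y_M}(s)=\Bigl(\prod_{k=1}^M\delta_k^{-m}\Bigr)\prod_{k=1}^M(s+\delta_k^{-1})^{-m},\]
so that the PDF equals the Bromwich integral
\[f_{Y_M}(y)=\Bigl(\prod_{k=1}^M\delta_k^{-m}\Bigr)\frac{1}{2\pi j}\int_{\mathcal L}\prod_{k=1}^M(s+\delta_k^{-1})^{-m}\,e^{sy}\,ds.\]
Using the identity $(s+\delta_k^{-1})^{-m}=\Gamma(\delta_k^{-1}+s)^m/\Gamma(1+\delta_k^{-1}+s)^m$ brings the integrand into exactly the form of the defining contour integral of $\bar H_{M,M}^{0,M}$ in \cite[Def. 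A.57]{mathai2009h}: the $M$ numerator gamma factors match the upper triples $(1-\delta_k^{-1},1,m)$, the $M$ denominator factors match the lower triples $(-\delta_k^{-1},1,m)$, and the kernel $e^{sy}$ identifies with $z^s$ at argument $z=e^y$, reproducing (\ref{eqn:pdf_Y}).

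The algebraic rearrangement and identification with $\bar H$ are essentially bookkeeping once the MGF is in product form, so the main obstacle is the first step: establishing the Kibble-type determinant form of the joint MGF for the specific correlation matrix $\mathbf C$ in (\ref{eqn:C_mat_def}), whose off-diagonal entries are $\lambda_{SD}^i\lambda_{SD}^j$ rather than the more common $\rho^{|i-j|}$ pattern. This step requires invoking (or re-deriving) a multivariate Nakagami-m result tailored to this rank-one-type correlation; one must also check that the eigenvalues $\delta_k$ of $\mathbf{DC}$ are real and positive so that the Bromwich contour and the $\bar H$-function contour can be chosen in the same half-plane. Positivity follows because $\mathbf{DC}$ is similar to the symmetric positive definite matrix $\mathbf D^{1/2}\mathbf C\mathbf D^{1/2}$, provided $\mathbf C$ is positive definite, which is a mild restriction on the $\lambda_{SD}^i$'s.
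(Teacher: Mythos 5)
Your proposal is correct and follows essentially the same route the paper intends: the paper's ``proof'' of this lemma is just a pointer to \cite{alouini2001sum} (which gives the MGF of the correlated-Gamma sum as $\prod_k(1+s\delta_k)^{-m}$ via the eigenvalues of $\mathbf{DC}$) and \cite{ansari2012sum} (which performs the inverse-Laplace/Mellin--Barnes identification with the generalized Fox's $\bar H$ function), and your two steps reproduce exactly that chain, with the added care of verifying positivity of the $\delta_k$'s.
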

\begin{proof}
Please refer to \cite{alouini2001sum} and \cite{ansari2012sum}.
\end{proof}

Accordingly, by using the result from Lemma \ref{theorem:PDF}, the cumulative distribution function (CDF) of ${Y_M}$ is given by the following Theorem.

\begin{theorem} \label{theorem:CDF}(CDF of the Sum of Gamma RVs with Exponential Correlation). For arbitrary $m$, the CDF of ${\gamma _{D,1}^M}$ can be derived in term of the Fox's $\bar{H}$ function as (\ref{eqn:CDF_Y}) as shown on the top of next page.
\begin{figure*}[!t]
\normalsize
\begin{equation}
\label{eqn:CDF_Y}
F_{\gamma _{D,1}^M}^{(1)}\left( y \right) = 1 - \prod\limits_{k = 1}^M {{\delta _k}^{ - m}} \bar H_{M + 1,M + 1}^{1,M}\left[ {\left. {\begin{array}{*{20}{c}}
{\overbrace {\left( {1 - {\delta _1}^{ - 1},1,m} \right), \cdots ,\left( {1 - {\delta _M}^{ - 1},1,m} \right)}^{M - bracketed\;terms},\left( {1,1} \right)}\\
{\left( {0,1} \right),\underbrace {\left( { - {\delta _1}^{ - 1},1,m} \right), \cdots ,\left( { - {\delta _M}^{ - 1},1,m} \right)}_{M - bracketed\;terms}}
\end{array}} \right|{e^y}} \right]
\end{equation}
\hrulefill
\vspace*{4pt}
\end{figure*}
\end{theorem}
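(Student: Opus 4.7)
The plan is to obtain the CDF by integrating the PDF expression from Lemma \ref{theorem:PDF} and then absorbing the extra $1/s$ factor that arises back into the Mellin-Barnes kernel, producing a Fox's $\bar H$ function with two additional parameter triplets. Concretely, I would compute the complementary CDF $1-F_{\gamma_{D,1}^M}^{(1)}(y)=\int_y^{\infty} f_{\gamma_{D,1}^M}^{(1)}(t)\,dt$ rather than $\int_0^y$, because the former leads to a single clean $\bar H$ term, whereas the latter produces an extra additive constant that must then be identified with $1$.

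First I would write the $\bar H$-function in Lemma \ref{theorem:PDF} explicitly through its Mellin--Barnes contour integral, so that the PDF takes the schematic form $f(t)=\prod_k \delta_k^{-m}\cdot\frac{1}{2\pi i}\int_L \chi(s)\,(e^t)^{-s}\,ds$, where $\chi(s)$ collects the $2M$ gamma-function factors (with exponents $m$) coming from the triplets $(1-\delta_k^{-1},1,m)$ and $(-\delta_k^{-1},1,m)$. Next I would swap the $t$-integration with the Mellin--Barnes contour integration (Fubini, using the absolute convergence of $\chi(s)$ on a suitable vertical contour and the exponential decay of $e^{-st}$ for $\mathrm{Re}(s)>0$), so that
\begin{equation*}
1-F(y)=\prod_{k=1}^{M}\delta_k^{-m}\cdot\frac{1}{2\pi i}\int_L \chi(s)\int_y^{\infty} e^{-st}\,dt\,ds=\prod_{k=1}^{M}\delta_k^{-m}\cdot\frac{1}{2\pi i}\int_L \frac{\chi(s)}{s}\,e^{-sy}\,ds.
\end{equation*}

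Then the key algebraic step is to rewrite the new factor $1/s$ using the gamma-function identity $1/s=\Gamma(s)/\Gamma(1+s)$. The factor $\Gamma(s)=\Gamma(0+1\cdot s)$ contributes a triplet $(0,1)$ (with exponent $1$) in the $b$-parameter row, and being a numerator gamma it raises the $m$-index of the $\bar H$ function from $0$ to $1$ and it is placed as the first entry of the lower row. The factor $\Gamma(1+s)=\Gamma(1+1\cdot s)$ in the denominator contributes a triplet $(1,1)$ in the $a$-parameter row, appended at the very end of the upper row (the $(n+1)$-th position, outside the first $n=M$ entries). Both $p$ and $q$ therefore increase by $1$, yielding exactly the $\bar H_{M+1,M+1}^{1,M}[\,\cdot\,|\,e^y]$ with the parameter arrangement shown in (\ref{eqn:CDF_Y}). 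Subtracting from $1$ gives $F_{\gamma_{D,1}^M}^{(1)}(y)$.

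The main obstacle I anticipate is the bookkeeping for the generalized $\bar H$ function: verifying that the contour $L$ can be chosen with $\mathrm{Re}(s)>0$ while still lying to the right of all poles of $\chi(s)$ so that Fubini applies and the new pole at $s=0$ is not crossed, and matching the resulting Mellin-Barnes kernel to the precise ordering/exponent convention of \cite{mathai2009h} so that the inserted $(0,1)$ and $(1,1)$ triplets land in the positions shown in (\ref{eqn:CDF_Y}). Once those conventions are fixed, the computation is a direct application of the Mellin--Barnes calculus; alternatively, one can invoke a standard Fox-$\bar H$ integration identity (of the form $\int_y^\infty \bar H[\,\cdot\,|\,e^t]\,dt$) to bypass the explicit contour argument, at the cost of citing rather than re-deriving the kernel augmentation rule.
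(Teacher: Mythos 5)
Your proposal follows essentially the same route as the paper's Appendix A: the paper likewise evaluates the complementary CDF $\int_y^\infty f(t)\,dt$, interchanges the $t$-integration with the Mellin--Barnes contour integral to produce the factor $\Gamma(s)/\Gamma(1+s)=1/s$, and then absorbs the two new gamma factors as the $(0,1)$ and $(1,1)$ parameter pairs of an $\bar H_{M+1,M+1}^{1,M}$ function (with a final substitution $s\to -s$ to match the argument $e^y$). The approach and the key kernel-augmentation step are the same, so the proposal is correct.
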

\begin{proof}
See Appendix \ref{appen_CDF}.
\end{proof}

There are some special cases for the expression (\ref{eqn:CDF_Y}) as follows:
\begin{enumerate}
  \item In the case of $M=0$, ${F_{{\gamma _{D,1}^M}}^{(1)}}\left( y \right)=1$ holds since
  \begin{equation}\label{eqn:speci_M_0}
  F_{\gamma _{D,1}^M}^{(1)}\left( y \right) = 1 - \frac{1}{{2\pi i}}\oint_C {\frac{{{e^{ - sy}}}}{s}ds}  = 1 - u\left( { - y} \right)
  \end{equation}
  where $u\left( { y} \right)$ represents Heaviside step function.
  \item For $M=1$, because $\gamma _{D,1}^M = z_1$ follows Gamma distribution, it follows that
  \begin{equation}\label{eqn:speci_M_1}
F_{\gamma _{D,1}^M}^{(1)}\left( y \right) = \Pr \left( {{z_1} \le y} \right) = \frac{1}{{\Gamma \left( m \right)}}\Upsilon \left( {m,\frac{y}{{{\beta _1}}}} \right)
  \end{equation}
  where $\Upsilon \left( {k,\theta } \right)$ represents the lower imcomplete gamma distribution.
  \item If $m$ is an integer, the generalized Fox's $\bar H$ function reduces to Meijer-G function \cite[Def. 9.30]{gradshteyn1965table}, thus $F_{\gamma _{D,1}^M}^{(1)}\left( y \right)$ is rewritten in terms of Meijer-G function as (\ref{eqn:CDF_special_m_int}). Specifically, the corresponding CDF for Rayleigh fading channels can be obtained with $m=1$.
      \begin{figure*}[!t]
\normalsize
\begin{equation}
\label{eqn:CDF_special_m_int}
F_{\gamma _{D,1}^M}^{(1)}\left( y \right) = 1 - \prod\limits_{k = 1}^M {{\delta _k}^{ - m}} G_{mM + 1,mM + 1}^{mM,1}\left[ {\left. {\begin{array}{*{20}{c}}
{1,\overbrace {\overbrace {\left( {1 + {\delta _1}^{ - 1}} \right), \cdots \left( {1 + {\delta _1}^{ - 1}} \right)}^{m - times}, \cdots ,\overbrace {\left( {1 + {\delta _M}^{ - 1}} \right), \cdots ,\left( {1 + {\delta _M}^{ - 1}} \right)}^{m - times}}^{M - bracketed\;terms}}\\
{\underbrace {\underbrace {\left( {{\delta _1}^{ - 1}} \right), \cdots \left( {{\delta _1}^{ - 1}} \right)}_{m - times}, \cdots ,\underbrace {\left( {{\delta _M}^{ - 1}} \right), \cdots ,\left( {{\delta _M}^{ - 1}} \right)}_{m - times}}_{M - bracketed\;terms},0}
\end{array}} \right|{e^{ - y}}} \right]
\end{equation}
\hrulefill
\vspace*{4pt}
\end{figure*}
\end{enumerate}

In a similar way, the closed-form CDF $F_{\gamma _R^M}^{(1)}\left( y \right) $ for combine SNR ${\gamma _{R}^M}$ can be derived.
The remained work is to determine the distribution of $\gamma _{D,2}^{M,r}$. By using moment-generating function (MGF), the CDF of $\gamma _{D,2}^{M,r}$ is given in the following theorem. Without loss of generality, we define ${y_l} = \frac{{{P_s}}}{{{\sigma ^2}}}{\left| {h_{SR}^l} \right|^2}$, then $\gamma _{D,2}^{M,r} = \sum\nolimits_{k = 1}^M {{z_k}}  + \sum\nolimits_{l = r + 1}^M {{y_l}} $, where $z_k$ and $y_l$ are independent of each other, and ${\{y_l\}}_{l=1}^M$ are exponentially-correlated Gamma RVs. It follows that $y_l \sim \mathcal{G} (m,\theta_l)$, where $\theta_l  = \Omega _{RD}^l{P_s}/\left( {m{\sigma ^2}} \right)$, and the cross-correlation between $y_i$ and $y_j$ is equal to $\rho_{RD}^{i,j}$.

\begin{theorem} \label{theorem:CDF_mgf}(CDF of $\gamma _{D,2}^{M,r}$). For arbitrary $m$, the CDF of $\gamma _{D,2}^{M,r}$ can be expressed in term of the Fox's $\bar{H}$ function as (\ref{eqn:CDF_d_2}) on next page, where ${\{\alpha _k\}}_{k=r+1}^{M}$ are defined as the eigenvalues of the matrix $\bf B=FE$, where $\bf F$ is the $(M-r) \times (M-r)$
diagonal matrix with the diagonal entries $\{\theta _l\}_{l=r+1}^{M}$, and $\bf E$ is the $(M-r) \times (M-r)$ positive definite matrix given by
\begin{equation}\label{eqn:C_mat_def}
{\bf G} = \left[ {\begin{array}{*{20}{c}}
1&{\sqrt {\rho _{RD}^{_{r + 1,r + 2}}} }& \cdots &{\sqrt {\rho _{RD}^{_{_{r + 1,M}}}} }\\
{\sqrt {\rho _{RD}^{_{r + 2,r + 1}}} }&1& \cdots &{\sqrt {\rho _{RD}^{_{_{r + 2,M}}}} }\\
 \vdots & \vdots & \ddots & \vdots \\
{\sqrt {\rho _{RD}^{_{M,r + 1}}} }&{\sqrt {\rho _{RD}^{_{_{M,r + 2}}}} }& \cdots &1
\end{array}} \right]
\end{equation}
\begin{figure*}[!t]
\normalsize
\begin{equation}
\label{eqn:CDF_d_2}
\begin{array}{l}
{F_{\gamma _{D,2}^{M,r}}^{(2)}}\left( y \right) = 1 - \prod\limits_{k = 1}^M {{\delta _k}^{ - m}} \prod\limits_{l = r + 1}^M {{\alpha _k}^{ - m}}  \times \\
\bar H_{M + 1,M + 1}^{1,M}\left[ {\left. {\begin{array}{*{20}{c}}
{\overbrace {\left( {1 - {\delta _1}^{ - 1},1,m} \right), \cdots ,\left( {1 - {\delta _M}^{ - 1},1,m} \right)}^{M - bracketed\;terms},\overbrace {\left( {1 - {\alpha _{r + 1}}^{ - 1},1,m} \right), \cdots ,\left( {1 - {\alpha _M}^{ - 1},1,m} \right)}^{\left( {M - r} \right) - bracketed\;terms},\left( {1,1} \right)}\\
{\left( {0,1} \right),\underbrace {\left( { - {\delta _1}^{ - 1},1,m} \right), \cdots ,\left( { - {\delta _M}^{ - 1},1,m} \right)}_{M - bracketed\;terms},\underbrace {\left( { - {\alpha _{r + 1}}^{ - 1},1,m} \right), \cdots ,\left( { - {\alpha _M}^{ - 1},1,m} \right)}_{\left( {M - r} \right) - bracketed\;terms}}
\end{array}} \right|{e^y}} \right]
\end{array}
\end{equation}
\hrulefill
\vspace*{4pt}
\end{figure*}
\end{theorem}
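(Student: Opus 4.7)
The plan is to extend the MGF-based argument of Theorem 1. Since $\{h_{SD}^k\}$, $\{h_{SR}^k\}$ and $\{h_{RD}^k\}$ are independent across the three links, the two sums $\sum_{k=1}^{M} z_k$ and $\sum_{l=r+1}^{M} y_l$ forming $\gamma_{D,2}^{M,r}$ are independent, so the MGF of $\gamma_{D,2}^{M,r}$ factorizes as the product of the two individual MGFs. I would then apply the eigen-decomposition argument underlying Lemma \ref{theorem:PDF} twice: once to the correlated Gamma vector $(z_1,\dots,z_M)$, diagonalizing via the spectrum $\{\delta_k\}$ of $\mathbf{A}=\mathbf{DC}$ to yield $\mathbb{E}[e^{s\sum_k z_k}] = \prod_{k=1}^{M}(1-\delta_k s)^{-m}$, and once to $(y_{r+1},\dots,y_M)$, diagonalizing via the spectrum $\{\alpha_l\}$ of $\mathbf{B}=\mathbf{FE}$ to yield $\mathbb{E}[e^{s\sum_l y_l}] = \prod_{l=r+1}^{M}(1-\alpha_l s)^{-m}$. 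The overall MGF is therefore a single product over $2M-r$ factors of the form $(1-c_j s)^{-m}$.

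Next I would pass from MGF to CDF via the Bromwich-type inversion $F_{\gamma}(y)=1-\frac{1}{2\pi i}\int_{\mathcal{L}}\frac{\mathcal{M}_{\gamma}(-s)}{s}e^{sy}\,ds$, rewriting each factor as $(1+c_j s)^{-m}=c_j^{-m}(c_j^{-1}+s)^{-m}$ (which exposes the prefactor $\prod_{k}\delta_k^{-m}\prod_{l}\alpha_l^{-m}$ in front of (\ref{eqn:CDF_d_2})) and then invoking the Mellin-Barnes representation $(1+x)^{-m}=\frac{1}{\Gamma(m)}\cdot\frac{1}{2\pi i}\int \Gamma(m+t)\Gamma(-t)x^{t}\,dt$ for each factor. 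Collecting all gamma kernels yields a single-fold contour integral whose integrand matches, term by term, the definition of the generalized Fox's $\bar{H}_{M+1,M+1}^{1,M}$ function: the first $M$ upper/lower triplets $(1-\delta_k^{-1},1,m)/(-\delta_k^{-1},1,m)$ come from the destination-side factors, the second block of $(M-r)$ triplets $(1-\alpha_l^{-1},1,m)/(-\alpha_l^{-1},1,m)$ from the relay-to-destination factors, and the extra $(1,1)$ upper and $(0,1)$ lower entries from the $1/s$ factor introduced by integrating the PDF, exactly as in the CDF formula (\ref{eqn:CDF_Y}).

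The main obstacle is not analytic but notational: one must verify that the Bromwich contour can be deformed so that the poles of $\Gamma(m+t)$ and those of $\Gamma(-t)$ (together with the simple pole at $t=0$ contributed by $1/s$) lie on the correct sides, matching the upper $(m,n)=(1,M)$ and lower $(p,q)=(M+1,M+1)$ indices in the definition of $\bar{H}$. Because the second block of $(M-r)$ Mellin-Barnes factors is structurally identical to the first, this reduces to precisely the contour analysis already carried out in Appendix \ref{appen_CDF} for Theorem \ref{theorem:CDF}, applied to an enlarged parameter list; no new analytic difficulty enters.
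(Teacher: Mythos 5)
Your proposal is correct and follows essentially the same route as the paper's proof: exploit the independence of $\sum_{k=1}^{M}z_k$ and $\sum_{l=r+1}^{M}y_l$ to factorize the MGF into $2M-r$ Gamma-type factors, then invert and identify the resulting Mellin--Barnes integral with the generalized Fox's $\bar H$ function exactly as in Appendix~\ref{appen_CDF}. The only (immaterial) difference is your use of the $\mathbb{E}[e^{sX}]$ sign convention for the MGF versus the paper's Laplace-transform convention.
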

\begin{proof}
See Appendix \ref{appen_CDF_mgf}.
\end{proof}

\section{Optimal Rate Selection}
\label{adap_sch}
An optimal rate selection scheme is proposed for cooperative HARQ-CC system over time-correlated fading channels via maximizing the DLT. Given the maximum allowable number of transmissions for a single packet $K$ and data transmission rate $\mathcal R$, the DLT is given by \cite{kim2008optimal}
\begin{equation}\label{eqn:dlt_def}
{\mathcal{T}_K} = \sum\limits_{k = 1}^K {\frac{\mathcal R}{k}} \left( {{\rm{P}}_D^{out}(k - 1) - {\rm{P}}_D^{out}(k)} \right)
\end{equation}
where ${{\rm{P}}_D^{out}(k)}$ defines the outage probability that the destination fails to decode the received signals after $k$ HARQ round. Specifically, ${{\rm{P}}_D^{out}(0)} = 1$, since it is impossible to decode the packet at round $0$. By using the law of total probability, ${{\rm{P}}_D^{out}(k)}$ is formulated as
\begin{equation}\label{eqn:out_total}
{\rm{P}}_D^{out}(k) = \sum\limits_{r = 1}^\infty  {\left( {{\rm{P}}_R^{out}(r - 1) - {\rm{P}}_R^{out}(r)} \right){\rm{P}}_D^{out}(k|r)} , 1 < k \le K
\end{equation}
where ${{\rm{P}}_R^{out}(r)}$ defines the outage probability at the relay after $r$ HARQ rounds, and ${{\rm{P}}_D^{out}(k|r)}$ represents the outage probability at the destination after $k$ HARQ rounds given that the relay successfully decodes the packet at round $r$. Therefore, to address the rate selection, the outage probability ${{\rm{P}}_R^{out}(r)}$ and ${{\rm{P}}_D^{out}(k|r)}$ should be derived first.
\subsection{Outage Probability}
For ${{\rm{P}}_R^{out}(r)}$, it is written as
\begin{equation}\label{eqn:out_pro}
{P_R^{out}\left( r \right) = \Pr \left( {I_R^r < \mathcal{R}} \right)}
\end{equation}
where $I_R^M$ represents the total accumulated mutual information for MRC at the relay till $M$ HARQ rounds, which is evaluated as
\begin{equation}\label{eqn:accumu_mul_inf}
I_R^r = {\log _2}\left( {1 + \gamma _R^r} \right)
\end{equation}

By substituting (\ref{eqn:accumu_mul_inf}) into (\ref{eqn:out_pro}), it yields
\begin{equation}\label{eqn:outage_final_rel}
P_R^{out}\left( r \right) = \Pr \left( {\gamma _R^r < {2^\mathcal{R}} - 1} \right) = F_{\gamma _R^r}^{(1)}\left( {{2^\mathcal{R}} - 1} \right)
\end{equation}
The last step holds by applying the Theorem \ref{theorem:CDF}.

For ${{\rm{P}}_D^{out}(k|r)}$, it is given by
\begin{equation}\label{eqn:out_de_def}
{\rm{P}}_D^{out}(k|r) = \Pr \left( {I_D^{k,r} < {\cal R}} \right)
\end{equation}
where $I_D^{k,r}$ represents the total accumulated mutual information for MRC at the destination till $M$ HARQ rounds given that the relay successfully decodes the packet at the $r$th round, and is formulated as
\begin{equation}\label{eqn_mul_des_def}
I_D^{k,r} = \left\{ \begin{array}{l}
{\log _2}\left( {1 + \gamma _{D,1}^k} \right),r \ge k\\
{\log _2}\left( {1 + \gamma _{D,2}^{k,r}} \right),else
\end{array} \right.
\end{equation}

By putting (\ref{eqn_mul_des_def}) into (\ref{eqn:out_de_def}) and then applying the Theorems \ref{theorem:CDF} and \ref{theorem:CDF_mgf}, it follows
\begin{equation}\label{eqn:out_des_final}
{\rm{P}}_D^{out}(k|r) = \left\{ \begin{array}{l}
F_{\gamma _{D,1}^k}^{(1)}\left( {{2^{\cal R}} - 1} \right),r \ge k\\
F_{\gamma _{D,2}^{k,r}}^{(2)}\left( {{2^{\cal R}} - 1} \right),else
\end{array} \right.
\end{equation}

Hereby, by substituting (\ref{eqn:out_de_def}) and (\ref{eqn:out_des_final}) into (\ref{eqn:out_total}), ${{\rm{P}}_D^{out}(k)}$ can be expressed as
(\ref{eqn:CDF_out_des_fin}) on next page.
\begin{figure*}[!t]
\normalsize
\begin{equation}
\label{eqn:CDF_out_des_fin}
{\rm{P}}_D^{out}(k) = \sum\limits_{r = 1}^{k - 1} {\left( {F_{\gamma _R^{r - 1}}^{(1)}\left( {{2^{\cal R}} - 1} \right) - F_{\gamma _R^r}^{(1)}\left( {{2^{\cal R}} - 1} \right)} \right)F_{\gamma _{D,2}^{k,r}}^{(2)}\left( {{2^{\cal R}} - 1} \right)}  + F_{\gamma _{D,1}^k}^{(1)}\left( {{2^{\cal R}} - 1} \right)F_{\gamma _R^{k - 1}}^{(1)}\left( {{2^{\cal R}} - 1} \right)
\end{equation}
\hrulefill
\vspace*{4pt}
\end{figure*}

In order to study the impact of channel correlation and fading severity on cooperative HARQ-CC system, we consider a common scenario for numerical analysis. Assume $\gamma_T = P_s/\sigma^2$, which is termed as transmit SNR. The mean power of each channel coefficient is assumed to be $\mathbb{E}(|h_{SD}^k|)=0.5$, $\mathbb{E}(|h_{SR}^k|)=\mathbb{E}(|h_{RD}^k|)=1$. Under these assumptions, the outage probability $P_{D}^{out}(K)$ is plotted versus transmit SNR $\gamma_T$ for different $\rho$, $m$ and $K$ by setting $R=2~\rm{bps/Hz}$ in Fig. \ref{fig:outage probability}. The presented results show a perfect match between the analytical and simulation results for $\rho=0.2,~0.8$ and $m=1,~2$. From this figure, it can be observed that the outage probability decreases as transmit SNR $\gamma_T$, maximum transmissions $K$ and fading-order parameter $m$ increase. Moreover, the outage probability increases as the correlation coefficient $\rho$ increases.
\begin{figure}
  \centering
  \includegraphics[width=3in]{./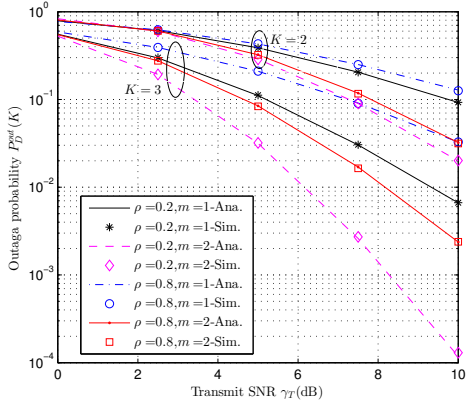}\\
  \caption{Outage probability $P_D^{out}(K)$ vs. transmit SNR $\gamma_T$.}\label{fig:outage probability}
\end{figure}

\subsection{Optimal Rate}
To find the optimal rate, the relation between DLT $\mathcal{T}$ and transmission rate $\mathcal{R}$ is investigated first. An example is given in Fig. \ref{fig:dlt_r} by setting $m=2$ and $\gamma_T=0~\rm{dB}$. From this figure, it can be observed that each curve has only one maximum point. This sole maximum point can be effectively obtained by adopting Golden section search.
\begin{figure}
  \centering
  \includegraphics[width=3in]{./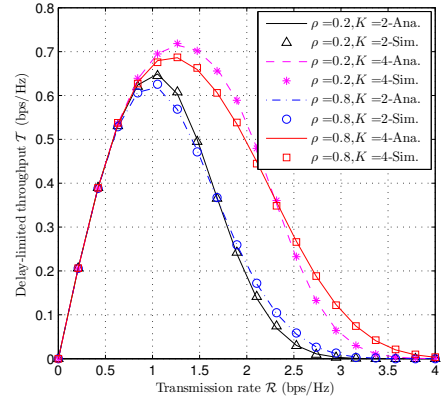}\\
  \caption{DLT $\mathcal{T}$ vs. transmission rate $\mathcal{R}.$}\label{fig:dlt_r}
\end{figure}
Hereby, for transmitting certain $b$ bits information, the DLT is maximized by selecting an optimal rate which is determined from the following optimization problem:
\begin{equation*}
\begin{aligned}
& \underset{\mathcal{R}}{\text{maximize}}
& & \mathcal{T}_M \\
& \text{subject to}
& & \mathcal{R} \ge 0.
\end{aligned}
\end{equation*}

As exhibited in Fig. \ref{fig:dlt} for a fixed value of $m=2$, the optimal rate for different $K$ and $\rho$ is plotted against transmit SNR. It can shown that the optimal rate increases as $\gamma_T$ increases. For instance, the optimal rate increases from $1~\rm{bps/Hz}$ to $2.2~\rm{bps/Hz}$ when $\gamma_T$ varies from $0$ to $10~dB$ for $\rho=0.2$ and $K=2$. If the maximum transmissions $K$ is increased, the optimal rate increases as well. Moreover, the optimal rate decreases with the increase of time-correlation of channels. It is due to the fact that the channels with high time-correlation will degrade the time diversity gain of HARQ.
\begin{figure}
  \centering
  \includegraphics[width=3in]{./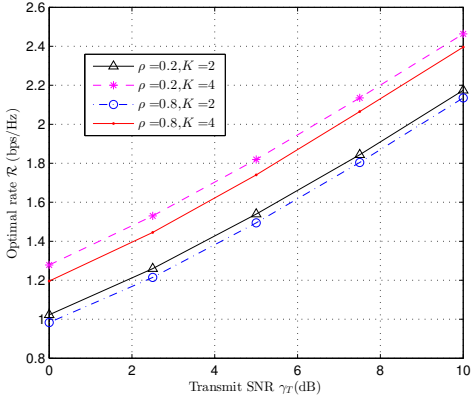}\\
  \caption{Optimal transmission rate $\mathcal{R}$ vs. transmit SNR $\gamma_T$.}\label{fig:dlt}
\end{figure}
\section{Conclusions}
\label{sec_con}
In this paper, an optimal rate selection scheme has been proposed for cooperative HARQ-CC systems, where the general time-correlated Nakagami-m fading model is considered. The involvement of cooperative communication and the presence of time-correlation make the problem challenging. The nature of rate selection is to choose the appropriate transmission rate to maximize the DLT. To cope with this problem, it is necessary to determine the distribution of the combine SNRs via MRC. By means of the generalized Fox's $\bar H$ function, the PDF and CDF of combine SNRs have been derived, with which the outage probability and DLT are then obtained.

These analytical results have been validated by Monte Carlo simulations. It has been found from the results that the system can achieve more diversity gain from the less correlated channels, and the outage probability decreases with the increase of fading-order parameter $m$, and etc. It is noteworthy that the metrics for evaluating the performance of the cooperative HARQ-CC system are derived in terms of the generalized Fox's H function, which has a low computational complexity and can be evaluated accurately. Finally, the optimal rate to maximize the DLT has been presented.
\appendices
\section{Proof of Theorem \ref{theorem:CDF}}
\label{appen_CDF}
By integrating the PDF expressed in (\ref{eqn:pdf_Y}) from $0$ through $y$, the CDF of ${\gamma _{D,1}^M}$ is obtained, more precisely,
\begin{equation}\label{eqn:CDF_Y_der}
\begin{array}{l}
{F_{{{\gamma _{D,1}^M}}}^{(1)}}\left( y \right) = 1 - \prod\limits_{k = 1}^M {{\delta _k}^{ - m}} \frac{1}{{2\pi i}}\int_y^\infty  {\oint_{\cal C} {\frac{{\prod\limits_{k = 1}^M {{\Gamma ^m}\left( {{\delta _k}^{ - 1} - s} \right)} }}{{\prod\limits_{k = 1}^M {{\Gamma ^m}\left( {1 + {\delta _k}^{ - 1} - s} \right)} }}{e^{ - st}}ds} dt} \\
{\rm{ = }}1 - \prod\limits_{k = 1}^M {{\delta _k}^{ - m}} \frac{1}{{2\pi i}}\oint_{\cal C} {\frac{{\Gamma \left( s \right)\prod\limits_{k = 1}^M {{\Gamma ^m}\left( {{\delta _k}^{ - 1} - s} \right)} }}{{\Gamma \left( {1 + s} \right)\prod\limits_{k = 1}^M {{\Gamma ^m}\left( {1 + {\delta _k}^{ - 1} - s} \right)} }}{e^{ - sy}}ds} \\
 = 1 - \prod\limits_{k = 1}^M {{\delta _k}^{ - m}} \frac{1}{{2\pi i}}\oint_{\cal C} {\frac{{\Gamma \left( { - s} \right)\prod\limits_{k = 1}^M {{\Gamma ^m}\left( {{\delta _k}^{ - 1} + s} \right)} }}{{\Gamma \left( {1 - s} \right)\prod\limits_{k = 1}^M {{\Gamma ^m}\left( {1 + {\delta _k}^{ - 1} + s} \right)} }}{e^{sy}}ds}
\end{array}
\end{equation}
After identifying the Mellin-Barnes contour integral in the above equation with the generalized Fox's $\bar{H}$ function, the CDF of ${\gamma _{D,1}^M}$ is obtained as (\ref{eqn:CDF_Y}).
\section{Proof of Theorem \ref{theorem:CDF_mgf}}
\label{appen_CDF_mgf}
Due to the independence of $\sum\nolimits_{k = 1}^M {{z_k}} $ and $\sum\nolimits_{l = r + 1}^M {{y_l}} $, the MGF of $\gamma _{D,2}^{M,r}$ is given by
\begin{equation}\label{eqn:mgf_d_2}
\begin{array}{l}
{M_{\gamma _{D,2}^{M,r}}}(s) = \prod\nolimits_{k = 1}^M {{\delta _k}^{ - m}{{\left( {{\delta _k}^{ - 1} + s} \right)}^{ - m}}}  \times \\
\prod\nolimits_{l = r + 1}^M {{\alpha _l}^{ - m}{{\left( {{\alpha _l}^{ - 1} + s} \right)}^{ - m}}}
\end{array}
\end{equation}

The PDF of $\gamma _{D,2}^{M,r}$ can be straightforward derived as a generalized Fox's $\bar H$ function via inverse Laplace transform. Furthermore, the CDF of $\gamma _{D,2}^{M,r}$ is derived as (\ref{eqn:CDF_d_2}) by using the same method from Theorem \ref{theorem:CDF}.

\section*{Acknowledgement}
This work was supported jointly by the Macao Science and Technology Development Fund under grant 067/2013/A and the Research Committee of University of Macau under grants MRG023, MYRG078 and MYRG101.

\bibliographystyle{ieeetr}
\bibliography{HARQ-CC}

\end{document}